\tikzstyle{map}=[->,semithick]
\tikzstyle{arc}=[bend left,->,semithick]
\tikzstyle{rinclusion}=[right hook->,semithick]
\tikzstyle{linclusion}=[left hook->,semithick]
\def\Ri{\mathcal R}
\def\R{{\mathbb R}}
\def\C{\mathcal C}
\def\eps{{\varepsilon}}
\newcommand{\norm}[1]{\left\lVert #1 \right\rVert}
\begin{document}

\title[Reeb Graph Reconstruction]{Faithful Reeb Graph Reconstruction of a Tectonic Subduction Zone from Earthquake Hypocenters}

\author{Halley Fritze}
\address{University of Oregon\\
    Eugene, OR, USA}
\email{hfritze@uoregon.edu}

\author{Sushovan Majhi}
\address{George Washington University\\
    Washington, DC, USA}
\email{s.majhi@gwu.edu}

\author{Marissa Masden}
\address{University of Puget Sound\\
    Tacoma, WA, USA}
\email{mmasden@pugetsound.edu}

\author{Atish Mitra}
\address{Montana Technological University\\
    Butte, MT, USA}
\email{amitra@mtech.edu}

\author{Michael Stickney}
\address{Montana Technological University\\
    Butte, MT, USA}
\email{mstickney@mtech.edu}

\keywords{Reeb graph, Tectonic subduction zones, Geometric reconstruction, Vietoris--Rips complex}

\begin{abstract}

An important problem in topological data analysis (TDA)---of both theoretical and practical interest---is to reconstruct the topology and geometry of an underlying (usually unknown) metric graph from possibly noisy data sampled around it.
Reeb graphs have recently been successfully employed in abstract metric graph reconstruction under  Gromov--Hausdorff noise: the sample is assumed to be metrically close to the ground truth. However, such a strong global density guarantee is often unavailable, making the existing Reeb graph-based methods unusable. 
A very different yet more relevant paradigm focuses on the reconstruction of metric graphs---embedded in the Euclidean space---from Euclidean samples that are only Hausdorff-close. We relax the density assumption to give provable geometric reconstruction schemes, even when the sample is metrically close only locally, but still provide provable guarantees for the successful geometric reconstruction of Euclidean graphs under the Hausdorff noise model. 

We apply our graph reconstruction techniques to reconstruct earthquake plate tectonic boundaries from the global earthquake catalog. The SLAB2.0 model is a comprehensive spatial summary of all known subduction zone slabs on Earth. We reconstruct parts of the SLAB2.0 model from possibly noisy earthquake hypocenter data.
\end{abstract}

\maketitle

\section{Introduction}
Modern technologies facilitate the easy and fast acquisition of large-scale geometric data. In many real-world applications, data come from a metric graph (known in bio sciences and astronomical sciences as filamentary structures). While some of these graphs are \emph{abstract}---e.g., sensor networks---other common applications involve samples around graphs embedded in some Euclidean space $\R^d$; we call such a graph \emph{embedded}.
Examples include GPS traces around road networks or maps \cite{Ahmed:2015:CTD:2820783.2820810}, earthquake locations around faults and plate tectonic boundaries \cite{geological-survey}, and more. The presence of noise in the data makes the reconstruction problem a computationally challenging problem in topological and geometric data analysis. 

Informally, a \emph{metric graph} $(G,d_G)$ is a combinatorial graph $(V,E)$ with the set of vertices $V$ and set of edges $E$, along with a \emph{length metric} $d_G$ that makes $G$ topologically equivalent to a $1$-dimensional simplicial complex; see \cite{BuragoBuragoIvanov} for a more formal definition. A sample around $G$ is commonly modeled as a finite metric space $(S, d_S)$. 
Then, a neighborhood graph $\mathcal{N}$ is computed at an appropriate scale as the input to the reconstruction algorithms. We choose to use the Vietoris--Rips graph $\mathcal{R}_\eta(S)$ on the raw data $S$ at a scale $\eta$, as defined in Section~\ref{sec:hausdorff}.

The reconstruction of metric graphs---both abstract and embedded---has been well-studied in the literature; see for example \cite{10.1145/1998196.1998203,chazal2013,dey_et_al:LIPIcs.SoCG.2018.31,NIPS2011_3a077244, Lecci:2014:SAM:2627435.2697074,Majhi2023,10.1145/3534678.3539029}.
Despite the existing reconstruction efforts for abstract graphs under the Gromov--Hausdorff setting, not many successful developments have surfaced to topologically and geometrically reconstruct embedded graphs from Euclidean samples that are only Hausdorff close to the ground truth. 
For embedded graphs, we particularly mention Majhi \cite{Majhi2023} for their technique using the Vietoris--Rips complexes. The resulting abstract simplicial complexes remain only topologically faithful, failing to produce a geometric graph as the \emph{output} in the same ambient space. This paper primarily aims to mitigate such caveats, developing provable reconstruction schemes for embedded graphs from noisy samples under the Hausdorff distance.

\paragraph{\bf Motivation and Contributions}
We are inspired by the recent pivotal success of Reeb graphs \cite{Chazal2015,chazal2013,NIPS2011_3a077244} in the reconstruction of abstract metric graphs geometrically.
The motivation for this work is two-fold.  (1) First, in the abstract setting, we relax the overly strong and impractical Gromov--Hausdorff assumption to provide reconstruction guarantees in Theorem~\ref{thm:geom-recon} from a locally close  $(\varepsilon, R)$-approximate (Def~\ref{def:eps-R-approx}) sample or neighborhood graph thereof. (2) Secondly, we consider the reconstruction of an embedded metric graph $G\subset\R^d$ such that the Euclidean distance induces the intrinsic distance $d_G$ on $G$ and the raw data $S\subset\R^d$ is only Hausdorff close to $G$ in $\R^d$, using the notion of restricted distortion (Def~\ref{def:rest-dist}) of a graph as a novel sampling parameter for embedded graphs - see Theorem~\ref{thm:main-thm}. In addition to our theoretical results, we demonstrate  our techniques in reconstructing earthquake plate tectonic boundaries from sensor data.

\section{Preliminaries}\label{sec:preli}

Let $G\subset\R^d$ be an embedded metric graph. $G$ comes equipped with the standard Euclidean metric $\|\cdot\|$.
Using the notion of the length $L(\gamma)$ of a continuous path $\gamma$ in $\R^d$, we can define the intrinsic or shortest path metric $d_G$ on $G$ as follows:
For any two points $x_1,x_2\in G$, it is the infimum of lengths of paths joining $x_1,x_2$. If $G$ is path-connected, $d_G$ defines a metric on $G$. For some basic concepts of metric geometry, see Appendix A.


\begin{definition}[$\alpha$-Reeb Graph]\label{def:alpha-Reeb}
Let $\alpha>0$, let $d(x)$ be the distance from a fixed base point, and let $\mathcal{I}=\left\{I_i\right\}$ be a covering of the range of $d$ by open intervals of length at most $\alpha$. The transitive closure of the relation $x \sim_\alpha y$ if and only if $d(x)=d(y)$ and $x, y$ are in the same path connected component of $d^{-1}\left(I_i\right)$ for some interval $I_i \in \mathcal{I}$ is an equivalence relation that is also denoted by $\sim_\alpha$. The quotient space $G_\alpha=X / \sim_\alpha$ is called the $\alpha$-Reeb graph (associated with the covering $\mathcal{I}$) of $d$ and we denote by $\pi: X \rightarrow G_\alpha$ the quotient map.  $\pi$ is continuous the function $d$ induces a function $d_*: G_\alpha \rightarrow \mathbb{R}_{+}$that satisfies $d=d_* \circ \pi$. See Figure~\ref{fig:alpha-reeb} in Appendix.
\end{definition}

\section{Hausdorff Noise Model}\label{sec:hausdorff}
Let $G\subset\R^d$ be an embedded, compact graph and $S\subset\R^d$ a finite sample
around $G$ such that their Hausdorff distance $d_H(S,G)$ is sufficiently small.
The sample and the graph come equipped with the standard Euclidean metric. The Euclidean distance on $G$ induces an intrinsic metric, denoted $d_G$, coming from the shortest paths on $G$.

\paragraph{\bf The Vietoris--Rips Graph}
For any compact $Y\subset\mathbb{R}^d$, the \emph{Vietoris--Rips graph} at scale $\eta>0$, denoted $\mathcal{R}_\eta(Y)$, is defined as a metric graph in the following manner. Combinatorially, $\mathcal{R}_\eta(Y)$  
is a graph whose vertex set is $Y$, and a pair $(y_1,y_2)\in Y$ forms an edge whenever $\|y_1-y_2\|\leq\eta$.
The combinatorial graph is then metrized in the following way: for each edge $e=(y_1,y_2)\in \mathcal{R}_\eta(Y)$, take an edge of length $\|y_1-y_2\|$ that is isometric to the real segment $[0,\|y_1-y_2\|]$. Then, define a metric on $\mathcal{R}_\eta(Y)$ by taking the shortest path metric, which we denote by $d^\eta_Y$. We note that $\mathcal{R}_\eta(Y)$ is $\Delta$-dense for any $\Delta\geq\eta$.  For $d^\eta_Y$  to be finite metric, the Euclidean thickening $Y^{\frac{\eta}{2}}$ of $Y$ must be path-connected. 


\paragraph{\bf Restricted Distortion}
Let $G\subset\R^d$ be an embedded metric graph. 
The restricted distortion of $G$, denoted $\delta^\eta_R(G)$, is parametrized by $\eta>0$ and $R>0$.

\begin{definition}[Restricted Distortion]\label{def:rest-dist}
For $\eta>0$ and $R>0$, the \emph{restricted distortion} or \emph{$(\eta,R)$-distortion} of an embedded metric graph $G\subset\R^d$ is defined as
\begin{equation*}\label{eq:delta-eps-R}
\delta^\eta_R(G)\coloneqq \sup_{d_G(x_1,x_2)\geq R} \frac{d_G(x_1,x_2)}{d^\eta_G(x_1,x_2)}.
\end{equation*}
\end{definition}
It immediately follows from the definition that $\delta^\eta_R(G)$ is a non-decreasing function of $\eta$ and a non-increasing function of $R$. 
Moreover, $\delta^\eta_R(G)\leq\delta(G)$, the global distortion of embedding of $G$. 
Indeed, we have
$\lim_{\substack{\eta\to\infty\\R\to0}}\delta^\eta_R(G)=\delta(G)$.

In case $G$ is compact, a remarkable property of the restricted distortion is that for any $R>0$, the restricted distortion $\delta^\eta_R(G)\to1$ as $\eta\to0$. The convergence provides the following important lemma. 

\begin{lemma}[$(\eps,R)$--approximation]\label{lem:main-approx}
Let $G\subset\R^d$ be an embedded graph. Let $R>0$ be a fixed number 
and $\varepsilon\leq\frac{1}{2}R$. 
If we choose
$\eta\in(0,R]$ small enough such that $\delta^{\eta}_{R}(G)\leq\frac{4}{3}$, then for any finite sample $S\subset\R^d$ with $d_H(G,S)<\frac{1}{4}\eta$, the metric space $(\Ri_\eta(S),d^\eta_S)$ is an $\left(\varepsilon,R\right)$-approximation of $(G,d_G)$.
\end{lemma}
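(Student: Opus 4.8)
\noindent The plan is to produce a single correspondence between $(\Ri_\eta(S),d^\eta_S)$ and $(G,d_G)$ and then verify for it the two requirements of Definition~\ref{def:eps-R-approx}: that it be dense enough, and that it distort distances in a controlled way on pairs that are $d_G$-far apart. Write $r\coloneqq d_H(G,S)<\tfrac14\eta$. I read ``$\eta$ small enough'' as small enough that moreover $\delta^{2\eta}_R(G)\le\tfrac43$ --- legitimate because, for fixed $R$, the map $\eta\mapsto\delta^\eta_R(G)$ is non-decreasing and tends to $1$ as $\eta\to0$ --- and, should the definition require it, small enough relative to $\eps$. Since $d_H(G,S)<\tfrac14\eta$, each Euclidean ball $B(s,\tfrac\eta2)$ with $s\in S$ meets the connected set $G\subseteq S^{\eta/2}$, so $S^{\eta/2}$ is path-connected and $d^\eta_S$ is a genuine finite metric. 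I would use the correspondence $C\coloneqq\{(s,g)\in S\times G:\norm{s-g}\le r\}$; it is total because $G$ and $S$ are compact and $\tfrac14\eta$-Hausdorff-close, and its slack is small --- a set of $G$-points corresponding to a single sample point has Euclidean diameter $\le 2r$, hence $d_G$-diameter $<R$, using $\delta^\eta_R(G)\le\tfrac43$ together with $\eta\le R$ --- so, in view of $r<\tfrac14R$ and $\eps\le\tfrac12R$, $C$ is dense enough for Definition~\ref{def:eps-R-approx}.

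The first estimate controls expansion. Fix $(s,g),(s',g')\in C$, parametrize a shortest path of $G$ from $g$ to $g'$ by arc length, and mark points $g=p_0,p_1,\dots,p_m=g'$ with consecutive arc-length gap at most $\eta-2r$, so that $\norm{p_i-p_{i+1}}\le\eta-2r$. Choose $q_i\in S$ with $\norm{q_i-p_i}\le r$ for $0<i<m$, and set $q_0\coloneqq s$, $q_m\coloneqq s'$. The triangle inequality gives $\norm{q_i-q_{i+1}}\le\eta$, so $q_0,\dots,q_m$ is a walk in $\Ri_\eta(S)$, and summing its edge lengths,
\[
d^\eta_S(s,s') \;\le\; d_G(g,g')+2rm \;\le\; \frac{\eta}{\eta-2r}\,d_G(g,g')+2r \;\le\; 2\,d_G(g,g')+\tfrac\eta2 ,
\]
with the factor $\eta/(\eta-2r)$ tending to $1$ as the noise level $r$ decreases.

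The second estimate rules out collapse and is the substantive step. Take $(s,g),(s',g')\in C$ with $d_G(g,g')\ge R$ and choose a \emph{minimal} shortest walk $s=s_0,s_1,\dots,s_k=s'$ in $\Ri_\eta(S)$ --- one with the fewest edges among the shortest. Minimality prevents two consecutive edges from both having length $\le\tfrac\eta2$: otherwise $\norm{s_{i-1}-s_{i+1}}\le\eta$ would let us delete $s_i$. Hence at least $\tfrac{k-1}{2}$ of the edges have length $>\tfrac\eta2$, so $d^\eta_S(s,s')>\tfrac{(k-1)\eta}{4}$ and $k<4\,d^\eta_S(s,s')/\eta+1$. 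Projecting each $s_i$ to a nearest point $\widehat{g}_i\in G$, with $\widehat{g}_0\coloneqq g$ and $\widehat{g}_k\coloneqq g'$, produces (triangle inequality again) a walk from $g$ to $g'$ in $\Ri_{3\eta/2}(G)$ of total length at most $d^\eta_S(s,s')+2rk<3\,d^\eta_S(s,s')+\tfrac\eta2$, so
\[
d^{3\eta/2}_G(g,g') \;<\; 3\,d^\eta_S(s,s')+\tfrac\eta2 .
\]
Now $\delta^{3\eta/2}_R(G)\le\delta^{2\eta}_R(G)\le\tfrac43$ together with $d_G(g,g')\ge R$ gives $d^{3\eta/2}_G(g,g')\ge\tfrac34 d_G(g,g')$, and therefore
\[
d^\eta_S(s,s') \;>\; \tfrac14 d_G(g,g')-\tfrac\eta6 .
\]

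Combining the two displayed inequalities with the elementary bound $d^\eta_G(g,g')\le d_G(g,g')$ (subdivide a $G$-geodesic into chords of length $\le\eta$), the correspondence $C$ carries $d_G$ to $d^\eta_S$ with multiplicative distortion bounded by absolute constants and additive error $O(\eta)$ on every pair of $d_G$-distance $\ge R$; since $\eps\le\tfrac12R$ and $\eta$ may be taken arbitrarily small, these errors sit inside the tolerances of Definition~\ref{def:eps-R-approx}, so $C$ certifies that $(\Ri_\eta(S),d^\eta_S)$ is an $(\eps,R)$-approximation of $(G,d_G)$. I expect the second estimate to be the main obstacle: pushing a Rips walk from one point set to the other lengthens every edge by $2r$, so the number of edges must be bounded --- hence the device that a minimal walk has no two consecutive short edges --- and the ambient scale drifts from $\eta$ to $\tfrac32\eta$, which is exactly why the restricted-distortion hypothesis gets invoked at the inflated scale, a harmless strengthening by the convergence $\delta^\eta_R(G)\to1$. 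When $d_H(G,S)$ is genuinely small compared to $\eta$ the distortion constants approach $1$; the crude threshold $\tfrac14$ merely costs a bounded factor, and tightening it together with the constant $\tfrac43$ is routine bookkeeping once this architecture is in place.
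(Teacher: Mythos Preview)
First, a point of comparison: the paper states Lemma~\ref{lem:main-approx} but does not actually prove it, either in the main text or in the appendix (the appendix supplies proofs only for Lemma~\ref{lem:diam-connected-component} and Theorems~\ref{thm:geom-recon} and~\ref{thm:main-thm}). So there is no original argument to compare yours against; what follows is an assessment of your proposal on its own terms.

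Your architecture---build the nearest-point correspondence, push Rips walks back and forth between $S$ and $G$, and invoke the restricted distortion to absorb the scale inflation from $\eta$ to $\tfrac32\eta$---is reasonable, but the conclusion you draw from your two displayed estimates does not meet the requirement of Definition~\ref{def:eps-R-approx}. That definition is \emph{additive}: it demands $|d^\eta_S(s,s')-d_G(g,g')|\le 2\varepsilon$ for every corresponding pair with $\min\{d^\eta_S,d_G\}\le R$. What you prove is only \emph{multiplicative} control, with fixed constants $2$ and $\tfrac14$. These constants come from the ratio $r/\eta$, which under the stated hypothesis $d_H(G,S)<\tfrac14\eta$ is bounded by $\tfrac14$ but does not tend to zero; in particular they do not improve as $\eta\to 0$. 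On a pair with $d_G(g,g')$ of order $R$ your inequalities allow $|d^\eta_S-d_G|$ of order $R$, whereas you need it $\le 2\varepsilon$, and the lemma permits $\varepsilon$ anywhere in $(0,\tfrac12R]$. The closing assertion that ``these errors sit inside the tolerances'' is exactly the step that is not justified, and it is not routine bookkeeping: converting a constant-factor multiplicative bound on distances of size $R$ into an additive $2\varepsilon$ bound requires either a much smaller ratio $r/\eta$ than $\tfrac14$ or a genuinely different argument.

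There is a second, related gap. Your non-collapse estimate is derived only under the assumption $d_G(g,g')\ge R$, but Definition~\ref{def:eps-R-approx} demands control precisely in the regime $\min\{d^\eta_S,d_G\}\le R$, which includes all pairs with $d_G(g,g')\le R$. For those pairs the hypothesis $\delta^\eta_R(G)\le\tfrac43$ says nothing directly (the supremum in Definition~\ref{def:rest-dist} runs only over $d_G\ge R$), and you give no argument ruling out a Rips shortcut between branches of $G$ that are Euclidean-close yet at intrinsic distance just below $R$. Some further reasoning---or a sharper use of the hypotheses---is needed before the correspondence $C$ can be certified as an $(\varepsilon,R)$-correspondence in the sense of Definition~\ref{def:eps-R-approx}.
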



\section{Reconstruction using Reeb Graphs}\label{sec:recon}
In this section, we present our main geometric reconstruction result. We begin by showing the following lemma, which generalizes the reconstruction results of \cite{chazal2013} under a weaker sampling assumption. Sketches of proofs are presented in the appendix.

\begin{lemma}\label{lem:diam-connected-component} 
Let $(X,d_X)$ and $(Y,d_Y)$ be compact geodesic metric spaces, and suppose there are $\eps, R >0$ such that  $X$ is  a $R$-dense $(\eps,R)$-approximation  of $Y$. Let base points $x_0 \in X$ and $y_0 \in Y$ be fixed and let $d_{x_0}(x)= d_X(x_0,x)$ and $d_{y_0}(y)= d_Y(y_0,y)$ be the distance functions from the base points, and let $M$ be the diameter of $Y$. For $d \ge \alpha \ge 0$,  let $D(X,d,\alpha)$ be the maximum diameter of any connected component of $d_{x_0}^{-1}([d-\alpha, d+\alpha])$, and let $D(Y,d,\alpha,M,R,\eps)$ be the maximum diameter of any connected component of $d_{y_0}^{-1}([d-\alpha - (\frac{4M}{R}+4)\eps, d+\alpha+ (\frac{4M}{R}+4)\eps])$. Then  $D(X,d,\alpha) \le D(Y,d,\alpha,M,R,\eps) + (\frac{4M}{R}+2)\eps$.
\end{lemma}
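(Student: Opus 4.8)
\emph{Proof idea.} The plan is to take an extremal pair of points in a maximal level component of $X$ together with a fine connecting chain inside it, transport everything across the correspondence to $Y$, and then bound how much the level value $d_{y_0}$ and the pairwise distance can drift.

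First I would fix a connected component $\widetilde C$ of $d_{x_0}^{-1}([d-\alpha,d+\alpha])$ of maximal diameter and, by compactness, points $a,b\in\widetilde C$ with $d_X(a,b)=D(X,d,\alpha)$. Let $C\subseteq X\times Y$ be a correspondence witnessing that $X$ is an $R$-dense $(\eps,R)$-approximation of $Y$ (Def~\ref{def:eps-R-approx}), set up so that $(x_0,y_0)\in C$, and pick $a',b'\in Y$ with $(a,a'),(b,b')\in C$. Since $\widetilde C$ is connected it is $\delta$-chain-connected for every $\delta>0$, so I fix a chain $a=p_0,p_1,\dots,p_k=b$ inside $\widetilde C$ with $d_X(p_i,p_{i+1})\le\delta$ and lift it to $q_0=a',q_1,\dots,q_k=b'$ in $Y$ with $(p_i,q_i)\in C$.

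The core step is to pin down the level values $d_{y_0}(q_i)=d_Y(y_0,q_i)$ against $d_{x_0}(p_i)=d_X(x_0,p_i)\in[d-\alpha,d+\alpha]$. Because the $(\eps,R)$-approximation compares distances cleanly only at scales $\ge R$, for a general $p_i$ I would run a geodesic in $X$ from $x_0$ to $p_i$ (of length at most $\diam(X)\le M+O(\eps)$), split it into $O(M/R)$ subarcs of length about $R$, lift their endpoints using the $R$-density, and telescope the scale-$R$ comparison hop by hop; this bounds the drift of the level value by $O(\tfrac{M}{R}\eps)$, which is the source of the factor $\tfrac{4M}{R}$. Since $Y$ is geodesic I then join consecutive $q_i$'s by geodesic segments, obtaining a path $\sigma$ from $a'$ to $b'$; each segment has length at most $d_Y(q_i,q_{i+1})$, which the $(\eps,R)$-approximation confines to the same $O(\eps)$ budget once $\delta$ is small, so $\sigma\subseteq d_{y_0}^{-1}\big([\,d-\alpha-(\tfrac{4M}{R}+4)\eps,\; d+\alpha+(\tfrac{4M}{R}+4)\eps\,]\big)$. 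Thus $a'$ and $b'$ lie in a single connected component of this enlarged level set, giving $d_Y(a',b')\le D(Y,d,\alpha,M,R,\eps)$.

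To finish, I would compare $d_X(a,b)$ with $d_Y(a',b')$ across the correspondence: directly when $d_X(a,b)\ge R$, and otherwise through the same chaining (which costs only $O(\tfrac{d_X(a,b)}{R}\eps)$, hence $O(\tfrac{M}{R}\eps)$). Collecting constants yields $D(X,d,\alpha)=d_X(a,b)\le d_Y(a',b')+(\tfrac{4M}{R}+2)\eps\le D(Y,d,\alpha,M,R,\eps)+(\tfrac{4M}{R}+2)\eps$, as claimed. I expect the main obstacle to be the core step: propagating control of the \emph{level value} through a correspondence that is accurate only beyond scale $R$ unavoidably costs a multiplicative $\tfrac{M}{R}$ factor, and one must at the same time ensure that the interpolating geodesics in $Y$ never leave the enlarged level set; reconciling these with the precise constants $\tfrac{4M}{R}+4$ and $\tfrac{4M}{R}+2$ is delicate but routine bookkeeping. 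A minor point is the use of $\delta$-chain-connectedness in place of path-connectedness of $\widetilde C$, valid for any connected metric space.
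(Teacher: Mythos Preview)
Your approach mirrors the paper's proof sketch: obtain the global distortion bound $(\tfrac{4M}{R}+2)\eps$ by chaining along $R$-dense geodesic subdivisions (the paper simply cites this from \cite{10.1145/1998196.1998203}), subdivide a connecting path inside the $X$-annulus, push the subdivision points across the correspondence, and join by geodesics in $Y$ to obtain a continuous path in the enlarged annulus; the final distance comparison then yields the $(\tfrac{4M}{R}+2)\eps$ additive term. One correction worth making: by Definition~\ref{def:eps-R-approx} the $(\eps,R)$-approximation controls $|d_X-d_Y|$ when $\min\{d_X,d_Y\}\le R$, not $\ge R$---this is precisely why both you and the paper subdivide into $R$-\emph{short} pieces, and your final case split should accordingly read ``directly when $d_X(a,b)\le R$, otherwise chain.''
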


\begin{theorem}[$\alpha$-Reeb Reconstruction]\label{thm:geom-recon}
Let $(G, d_G)$ be a compact metric graph of diameter $M$ and $(X, d_X)$ a compact geodesic space such that $X$ is $R$-dense and an $(\eps,R)$- approximation of $G$. 
Let $r \in X$, $\alpha  > 0$, and $\mathcal{I}$ be a finite covering of the segment $[0, diam{(X)}]$ by open intervals of length at most $\alpha$ such that the $\alpha$-Reeb graph $G_{\alpha}$ associated to $\mathcal{I}$ and the function $d = d_X (r, .) :X\to \mathbb{R}$ is a finite graph. 
Then $d_{GH}(X, G_\alpha) \le (\beta_1(G_{\alpha}) + 1)[4(2 + N_{E,G}(4(\alpha + \frac{4\eps M}{R} +4\eps))](\alpha + \frac{4\eps M}{R} +4\eps) + (\frac{4M}{R}+2)\eps$, where $N_{E,G}(\ell)$ is the number of edges of $G$ of length at most $\ell$.
Moreover, as $R\to\infty$, this  Gromov--Hausdorff approximation reduces to $d_{GH}(X, G_\alpha)\leq (\beta_1(G_{\alpha}) + 1)[4(2 + N_{E,G}(4(\alpha +4\eps))](\alpha +4\eps) + 2\eps$.
\end{theorem}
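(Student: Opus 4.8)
The plan is to bound $d_{GH}(X,G_\alpha)$ through the distortion of the correspondence induced by the Reeb quotient map $\pi\colon X\to G_\alpha$. I would equip $G_\alpha$ with the length metric $d_{G_\alpha}$ obtained by pushing forward $d_X$ along $\pi$; then $\pi$ is $1$-Lipschitz, so for the surjective correspondence $\mathcal{C}=\{(x,\pi(x)):x\in X\}$ the inequality $d_{G_\alpha}(\pi(x),\pi(y))\le d_X(x,y)$ holds for free, and the whole problem reduces to a reverse, additively perturbed inequality $d_X(x,y)\le d_{G_\alpha}(\pi(x),\pi(y))+C_0$ for all $x,y\in X$, with $C_0$ the constant claimed by the theorem; since $d_{GH}(X,G_\alpha)\le\tfrac12\operatorname{dis}(\mathcal{C})$, this suffices. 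Two estimates feed into $C_0$: a bound on the diameters of the fibers of $\pi$, and a way to lift geodesics of $G_\alpha$ back into $X$ without increasing their length by more than a constant.

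First I would bound the fibers. By Definition~\ref{def:alpha-Reeb} a fiber $\pi^{-1}(p)$ lies inside one level set of $d=d_X(r,\cdot)$ and is a finite chain of connected components of the sets $d^{-1}(I_i)$, so its $d_X$-diameter is at most $D(X,d,\alpha)$ for the relevant value, in the notation of Lemma~\ref{lem:diam-connected-component}. Since $X$ is an $R$-dense $(\eps,R)$-approximation of $G$, that lemma gives $D(X,d,\alpha)\le D(G,d,\alpha,M,R,\eps)+(\tfrac{4M}{R}+2)\eps$, reducing matters to the diameter of a connected component of the band $d_{y_0}^{-1}([c-\ell,c+\ell])$ in the metric graph $G$, where $\ell=\alpha+(\tfrac{4M}{R}+4)\eps=\alpha+\tfrac{4\eps M}{R}+4\eps$. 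Using that $d_{y_0}$ is $1$-Lipschitz on each edge, such a component is a connected subgraph assembled from whole edges of $G$ --- each of $G$-length at most $4\ell$, hence at most $N_{E,G}(4\ell)$ of them --- together with at most two boundary partial-edge pieces of length at most $4\ell$, so its diameter is at most $4\bigl(N_{E,G}(4\ell)+2\bigr)\ell$. This is where the factor $4\bigl(2+N_{E,G}(4(\alpha+\tfrac{4\eps M}{R}+4\eps))\bigr)(\alpha+\tfrac{4\eps M}{R}+4\eps)$ comes from.

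Next I would lift geodesics. Given $x,y\in X$, choose a geodesic $\sigma$ of $G_\alpha$ from $\pi(x)$ to $\pi(y)$. The images $\pi(C_i)$ of the connected components $C_i$ of the $d^{-1}(I_i)$ form an open cover of $G_\alpha$, each being the image of a subset of $X$ of diameter at most $D(X,d,\alpha)$. I would split $G_\alpha$ into its pendant trees and its core (all vertices of degree $\ge 2$): by an Euler-characteristic count the core has at most $2\beta_1(G_\alpha)-2$ vertices of degree $\ge 3$, hence $O(\beta_1(G_\alpha)+1)$ maximal degree-$2$ arcs, and since $\sigma$ is a shortest path it is simple, so it enters pendant trees only near its two endpoints and traverses at most $O(\beta_1(G_\alpha)+1)$ arcs in total. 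Over each such arc the preimage in $X$ is a thin tube whose diameter exceeds the arc's $d_{G_\alpha}$-length by at most one fiber diameter, and each passage through a branch vertex costs at most one more; concatenating and using that $X$ is geodesic yields a path from $x$ to $y$ of length at most $\ell(\sigma)+(\beta_1(G_\alpha)+1)\cdot 4\bigl(2+N_{E,G}(4\ell)\bigr)\ell+(\tfrac{4M}{R}+2)\eps$. As $\ell(\sigma)=d_{G_\alpha}(\pi(x),\pi(y))$, this is the desired reverse inequality with the asserted $C_0$. The \emph{Moreover} clause then follows by letting $R\to\infty$, which sends $\tfrac{4\eps M}{R}$ and $\tfrac{4M}{R}$ to $0$ and recovers the reconstruction bound of \cite{chazal2013} under the global sampling hypothesis.

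The hard part will be the lift: I expect the real work is showing that the tube over a \emph{long} maximal degree-$2$ arc of $G_\alpha$ is genuinely thin in $X$ --- with diameter bounded by the arc's length plus a single fiber diameter even when $d_*$ is non-monotone along the arc --- and then tracking the constants so that the combinatorial multiplier is exactly $\beta_1(G_\alpha)+1$ and the $(\tfrac{4M}{R}+2)\eps$ error is paid only once rather than once per arc. By contrast, the fiber bound and the Euler-characteristic bookkeeping should be routine once Lemma~\ref{lem:diam-connected-component} is in hand.
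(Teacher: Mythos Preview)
Your approach is essentially the paper's: both feed Lemma~\ref{lem:diam-connected-component} into the Chazal et al.\ $\alpha$-Reeb machinery, bounding the $X$-annulus diameters by the $G$-annulus diameters (plus the $(\tfrac{4M}{R}+2)\eps$ error) and then bounding the latter via the short-edge count $N_{E,G}$. The only difference is packaging: the paper invokes Proposition~3 of \cite{Chazal2015} and the surrounding Gromov--Hausdorff framework as a black box, whereas you unpack that framework explicitly (the quotient correspondence, the geodesic lift, the $(\beta_1+1)$ combinatorial multiplier), so your ``hard part'' is exactly what the citation absorbs.
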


\paragraph{\bf Main Reconstruction Results}
We now present our main result to reconstruct the geometry of an embedded graph under the Hausdorff noise.
\begin{theorem}[Reconstruction under Hausdorff noise]\label{thm:main-thm}
Let $G\subset\R^d$ be a connected, compact embedded graph of diameter $M$. Let $\varepsilon>0$ and $R\in[2\varepsilon,\infty]$.
Choose $\eta\in(0,R)$ such that $\delta^\eta_R(G)\leq\frac{4}{3}$.
Let $S\subset\R^d$ be a compact subset such that $d_H(S,G)  < \frac{1}{4}\eta$. Let $r \in \mathcal{R}_\eta(S)$, $\alpha  > 0$, and $\mathcal{I}$ be a finite covering of the segment $[0, Diam(\mathcal{R}_\eta(S)]$ by open intervals of length at most $\alpha$ such that the $\alpha$-Reeb graph $G_{\alpha}$ associated to $\mathcal{I}$ and the function $d = d_S (r, .) :\mathcal{R}_\eta(S))\to \mathbb{R}$ is a finite graph. Then,  
$d_{GH}(X, G_\alpha) \le (\beta_1(G_{\alpha}) + 1)[4(2 + N_{E,G}(4(\alpha + \frac{4\eps M}{R} +4\eps))](\alpha + \frac{4\eps M}{R} +4\eps) + (\frac{4M}{R}+2)\eps$, where $N_{E,G}(\ell)$ is the number of edges of $G$ of length at most $\ell$.
\end{theorem}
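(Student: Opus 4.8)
The plan is to deduce Theorem~\ref{thm:main-thm} directly from Lemma~\ref{lem:main-approx} and Theorem~\ref{thm:geom-recon}: all one has to do is verify that, under the stated hypotheses, the metric space $X:=(\Ri_\eta(S),d^\eta_S)$ fulfills every requirement of the $\alpha$-Reeb reconstruction theorem applied to the ground-truth graph $G$, after which the claimed estimate on $d_{GH}(X,G_\alpha)$ is exactly its output.

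First I would argue that $X$ is a compact geodesic metric space with $d^\eta_S$ finite. Since $d_H(S,G)<\tfrac14\eta$, the sample $S$ is $\tfrac14\eta$-dense in the connected compact space $G$; covering a path in $G$ between two arbitrary points of $S$ by Euclidean balls of radius $\tfrac14\eta$ and hopping through nearby points of $S$ produces an edge path in $\Ri_\eta(S)$, so the combinatorial graph $\Ri_\eta(S)$ is connected (equivalently, the thickening $S^{\eta/2}$ is path-connected), and therefore $d^\eta_S$ is a genuine finite length metric; a connected compact metric graph with its shortest-path metric is compact and geodesic. Moreover, as recorded in Section~\ref{sec:hausdorff}, $\Ri_\eta(S)$ is $\Delta$-dense for every $\Delta\ge\eta$, and since $\eta<R$ the space $X$ is in particular $R$-dense.

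Next I would invoke Lemma~\ref{lem:main-approx}. The assumption $R\in[2\eps,\infty]$ yields $\eps\le\tfrac12 R$, the chosen $\eta\in(0,R)$ satisfies $\delta^\eta_R(G)\le\tfrac43$ by hypothesis, and $d_H(G,S)<\tfrac14\eta$ is given; hence $X=(\Ri_\eta(S),d^\eta_S)$ is an $(\eps,R)$-approximation of $(G,d_G)$. (Lemma~\ref{lem:main-approx} is phrased for a finite sample, but its proof uses only the Hausdorff bound; if desired one may first pass to a finite subset of $S$ that is still within $\tfrac14\eta$ of $G$ in Hausdorff distance.) Combined with the $R$-density from the previous step, $X$ now satisfies precisely the hypotheses of Theorem~\ref{thm:geom-recon} for $G$ of diameter $M$, the base point $r$, the distance function $d=d_X(r,\cdot)$, the parameter $\alpha$, and the covering $\mathcal{I}$ of $[0,\diam(X)]$; applying that theorem gives the displayed bound verbatim. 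For the endpoint $R=\infty$ one passes to the limit $R\to\infty$, where $\tfrac{4\eps M}{R}$ and $\tfrac{4M}{R}$ vanish and the estimate collapses to the simplified form of Theorem~\ref{thm:geom-recon}.

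The genuinely nontrivial steps here are the two verifications in the second paragraph: showing that the Hausdorff bound forces $\Ri_\eta(S)$ to be connected (so that $d^\eta_S$ is finite and $X$ is a legitimate input), and matching the ``$R$-dense'' hypothesis of Theorem~\ref{thm:geom-recon} with the ``$\Delta$-dense'' property of Vietoris--Rips graphs. I expect both to be routine consequences of $\eta<R$ and $d_H(S,G)<\tfrac14\eta$; the remainder of the argument is a direct chaining of the two cited results, and the only mild subtlety is reconciling the finite-versus-compact discrepancy between the hypotheses of Lemma~\ref{lem:main-approx} and of Theorem~\ref{thm:main-thm}.
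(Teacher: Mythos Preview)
Your proposal is correct and follows exactly the paper's approach: invoke Lemma~\ref{lem:main-approx} to get that $X=\Ri_\eta(S)$ is an $(\eps,R)$-approximation of $G$, note that $X$ is $R$-dense, and then apply Theorem~\ref{thm:geom-recon} verbatim. The paper's own proof is a terse two-line version of precisely this chaining, so the additional details you supply (connectedness of $\Ri_\eta(S)$, the finite-versus-compact remark, and the $R=\infty$ limit) are welcome elaborations rather than departures.
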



\section{Application: Earthquake Plate Tectonic Boundary Reconstruction}
Motivated by \cite{chazal2013}, we seek to apply our techniques of Reeb graph reconstruction to reconstruct tectonic slab contours \cite{slab} using earthquake hypocenters  \cite{geological-survey}; see \figref{data-and-truth}.
Whereas the $\alpha$-Reeb complex is guaranteed to reconstruct the topological structure of the graph with sufficiently small noise, an earthquake fault network is a geospatial object, and the graph reconstruction needs to not only capture the topological structure of the fault network but also follow the location of the faults. Earthquake location accuracy and uncertainty varies according to available seismographic data, thus, a local noise model is appropriate. We introduce a geometric post-processing in \ssecref{post-process}.

Currently, the $\alpha$-Reeb construction only uses hypocenters measured by latitude, longitude, and depth (\figref{data-and-truth}). By including additional features like location error ellipsoids (see Remark \ref{rem:ellipsoid}), we predict that our graph reconstruction will be more accurate.

\subsection{Comparison with SLAB2.0}\label{ssec:slab}

The SLAB2.0 model \cite{slab} is a comprehensive spatial analysis of all known subduction zone slabs on Earth in unprecedented detail, provided as contour lines, as in \figref{data-and-truth}. We use this as a benchmark for comparison for our geometric construction results. In \figref{results}, we reconstruct each contour line using earthquake hypocenters in 20 km depth ranges. Here, we choose our parameters $R=500$ and $\alpha=1000$, and we post-process our $\alpha$-Reeb graph to preserve the geometric features of the data. Due to the sparsity of our data set, we cannot achieve the exact lengths of the contour lines in \figref{data-and-truth}. However, when comparing a single reconstructed contour with the SLAB2.0 contour, we can see the preservation of the overall geometric features of the slab in our $\alpha$-Reeb graph. It should be kept in mind that our reconstruction uses only earthquake hypocenters, while SLAB2.0 uses additional relevant seismic data, too.

\section{Discussions}

We present provable reconstruction schemes for graphs embedded in Euclidean space from Hausdorff-close noisy samples. In our approach, we build a neighborhood graph on the sample that is metrically close to the original graph only locally and demonstrate that its $\alpha$-Reeb graph provides a geometrically faithful reconstruction. The success of our geometric reconstruction immediately evokes the question of topological reconstruction of embedded graphs using the same Reeb-based techniques as a future research direction. The nature of our research is ongoing, and we envision making our experimentation with earthquake data more mature and comprehensive. 
Our reconstruction scheme can benefit several other application areas where the ground truth is an embedded graph.


\clearpage
\appendix

\printbibliography

\section{Some Metric Geometry}

Let $(X,d_X)$ be a metric space. 
Let $A$ and $B$ be compact, non-empty subsets. 
In the following two most prevalent reconstruction settings, noise in the data is traditionally quantified by (a) \textbf{Abstract}: Gromov--Hausdorff distance (Def~\ref{def:gh}) between the abstract graph and a neighborhood graph $\mathcal{N}$ built on the raw data at an appropriate scale, and
(b) \textbf{Embedded}: Hausdorff distance (Def~\ref{def:dH}) in case $G$ is embedded in $\R^d$ and $S\subset\R^d$ is a finite Euclidean subset equipped with the standard Euclidean norm $\|\cdot\|$.
\begin{definition}[Hausdorff Distance]\label{def:dH}
The \emph{Hausdorff distance} between the $A$ and $B$, denoted $d^X_H(A, B)$, is defined as
\[
d^X_H(A, B) \coloneqq 
\max\left\{\sup_{a\in A}\inf_{b\in B}d_X(a,b),\sup_{b\in A}\inf_{a\in B}d_X(a,b)\right\}.
\]
In case $X\subset\mathbb{R}^N$ and $A,B,X$ are all equipped with the Euclidean metric, we simply write $d_H(A,B)$.
\end{definition}

\paragraph{The Gromov--Hausdorff Distance}
A \emph{correspondence} $\C$ between any two metric spaces $(X,d_X)$ and $(Y,d_Y)$ is defined as a subset of $X\times Y$ such that (a) for any $x\in X$, there exists $y\in Y$ such that $(x,y)\in\C$, and (b) for any $y\in Y$, there exists $x\in X$ such that $(x,y)\in\C$.
We denote the set of all correspondences between $X,Y$ by $\C(X,Y)$. The
\emph{distortion} of a correspondence $\C\in\C(X,Y)$ is defined as:
\[
\mathrm{dist}(\C)\coloneqq\sup_{(x_1,y_1),(x_2,y_2)\in\C}
|d_X(x_1,x_2)-d_Y(y_1,y_2)|.
\]
\begin{definition}[Gromov--Hausdorff Distance]\label{def:gh} Let $(X,d_X)$ and $(Y,d_Y)$ be two compact metric spaces. The \emph{Gromov--Hausdorff} distance between $X$ and $Y$, denoted by $d_{GH}(X,Y)$, is defined as:
\[
d_{GH}(X,Y)\coloneqq
\frac{1}{2}\left[\inf_{\C\in\C(X,Y)}\mathrm{dist}(\C)\right].
\]
\end{definition}
A slightly weaker and more local notion is $(\eps,R)$- approximation for constants $\eps,R>0$. 
The notion has already been used in \cite{10.1145/1998196.1998203,MajhiStability} in the context of reconstruction.
However, we use a slightly different version to match the traditional definition of the Gromov--Hausdorff distance.
\begin{definition}[$(\eps,R)$-Approximation]\label{def:eps-R-approx}
 metric For $\eps,R$ positive, two metric spaces $(X,d_X)$ and $(Y,d_Y)$ are said to be an $(\eps,R)$-approx\-imation of each other if there exists a correspondence $\C\subset X\times Y$ such that for any $(x_1,y_1),(x_2,y_2)\in\C$
with $$\min\{d_X(x_1,x_2), d_Y(y_1,y_2)\}\leq R,$$ we have $|d_X(x_1,x_2)-d_Y(y_1,y_2)|\leq 2\eps$.
\end{definition}
To see that this is a weaker notion, we note that $d_{GH}(X, Y)<\eps$ implies that the two metric spaces are $(\eps, R)$ for any $R>0$. However, an $(\eps, R)$-approximation may not always imply $\eps$-closeness in the Gromov--Hausdorff distance; see \cite[Example 2.7]{MajhiStability}.

\begin{definition}[$\Delta$-Dense]\label{def:R-dense}
A path metric space $(X,d_X)$ is said to be $\Delta$-dense if for 
 $x,x' \in X$ there exists a sequence $\{x_0=x, x_1, \cdots, x_n=x'\}$ such that for all $i=0, \cdots, n-1$, we have $d_X(x_i,x_{i+1}) \le \Delta$ and $d_X(x,x')=\sum_{i=0}^{n-1} d_X(x_i,x_{i+1})$.
\end{definition}

 \section{Additional Results and  Proofs}

Here we present sketches of proofs of the results in section \ref{sec:recon}.
 
\begin{proof}[Proof of Lemma \ref{lem:diam-connected-component}]
Let  $\mathcal{C} \subset X \times Y$ be a $(\eps, R)$ correspondence. As $X$ is $R$-dense, 
distances in $X$ are distorted from corresponding distances in $Y$ by at most $(\frac{4M}{R}+2)\eps$ (see \cite{10.1145/1998196.1998203}). Given points $x,x'$ and a continuous path $\gamma_X$ joining them in a connected component of $d_{x_0}^{-1}([d-\alpha, d+\alpha])$, and letting $(x,y),(x',y') \in \mathcal{C}$, we can construct (by subdividing the path $\gamma_X$ in $R$-short subpaths) a continuous path  $\gamma_Y$ joining $y,y'$ in a connected component of $d_{y_0}^{-1}([d-\alpha - (\frac{4M}{R}+4)\eps, d+\alpha+ (\frac{4M}{R}+4)\eps])$. The result then follows.

\end{proof}

\begin{proof}[Proof of Theorem \ref{thm:geom-recon}]
Proposition 3 of \cite{Chazal2015} gives a bound of the diameters of the connected components of the $\alpha$-thick annuli in terms of $N_{E,G}(\alpha)$ - the number of edges of length at most $\alpha$. That bound, alongwith Lemma \ref{lem:diam-connected-component},  gives the   Gromov-Hausdorff bound for $\alpha$-Reeb graphs. Here $\beta_1(G_{\alpha})$ is the first Betti number of $G_{\alpha}$.
\end{proof}

\begin{proof}[Proof of Theorem \ref{thm:main-thm}]
From Lemma~\ref{lem:main-approx}, we have that $X=\mathcal{R}_\eta(S)$ is an $(\varepsilon, R)$-approximation of $(G,d_G)$. As we also noted, $X$ is $R$-dense. Theorem~\ref{thm:geom-recon} then implies that
$d_{GH}(X, G_\alpha) \le (\beta_1(G_{\alpha}) + 1)[4(2 + N_{E,G}(4(\alpha + \frac{4\eps M}{R} +4\eps))](\alpha + \frac{4\eps M}{R} +4\eps) + (\frac{4M}{R}+2)\eps$.
If the smallest edge length of $G$ is at least $4(\alpha+\frac{4\varepsilon M}{R}+4\varepsilon)$, then the Gromov--Hausdorff distance can be bounded by $(\beta_1(G_\alpha)+1)(8\alpha+ \frac{36 \eps M}{R} +34\eps)$.
\end{proof}

\begin{remark}\label{rem:ellipsoid}
Earthquake epicenter data is often  available with error ellipsoids. We can incorporate the bounds of the axes of the error ellipsoids into Theorem \ref{thm:main-thm}, getting  more technically complicated versions of the Gromov-Hausdorff bound. We omit those bounds here for brevity.
\end{remark}

\section{Figures and Post Processing Techniques}

\begin{figure}[!ht]
    \centering
    \includegraphics[width=0.7\linewidth]{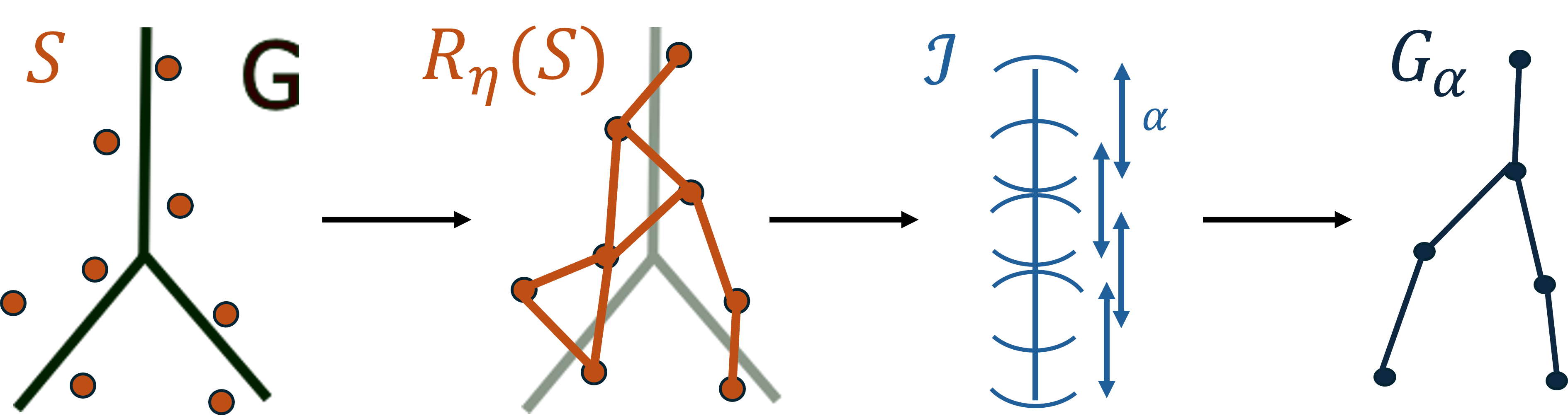}
    \caption{The $\alpha$-Reeb reconstruction under a Hausdorff noise model. First consider a Hausdorff close sample $S$ of a graph $G$. Second, construct a Vietoris-Rips graph $R_\eta(S)$ on the sample
    for some $\eta >0$. 
    Choose a random base point and filter by distance to the base point. 
    Use this to construct the $\alpha$-Reeb graph $G_\alpha$.}
    \label{fig:alpha-reeb}
\end{figure}

\begin{figure}
    \centering
    \includegraphics[width=0.3\linewidth]{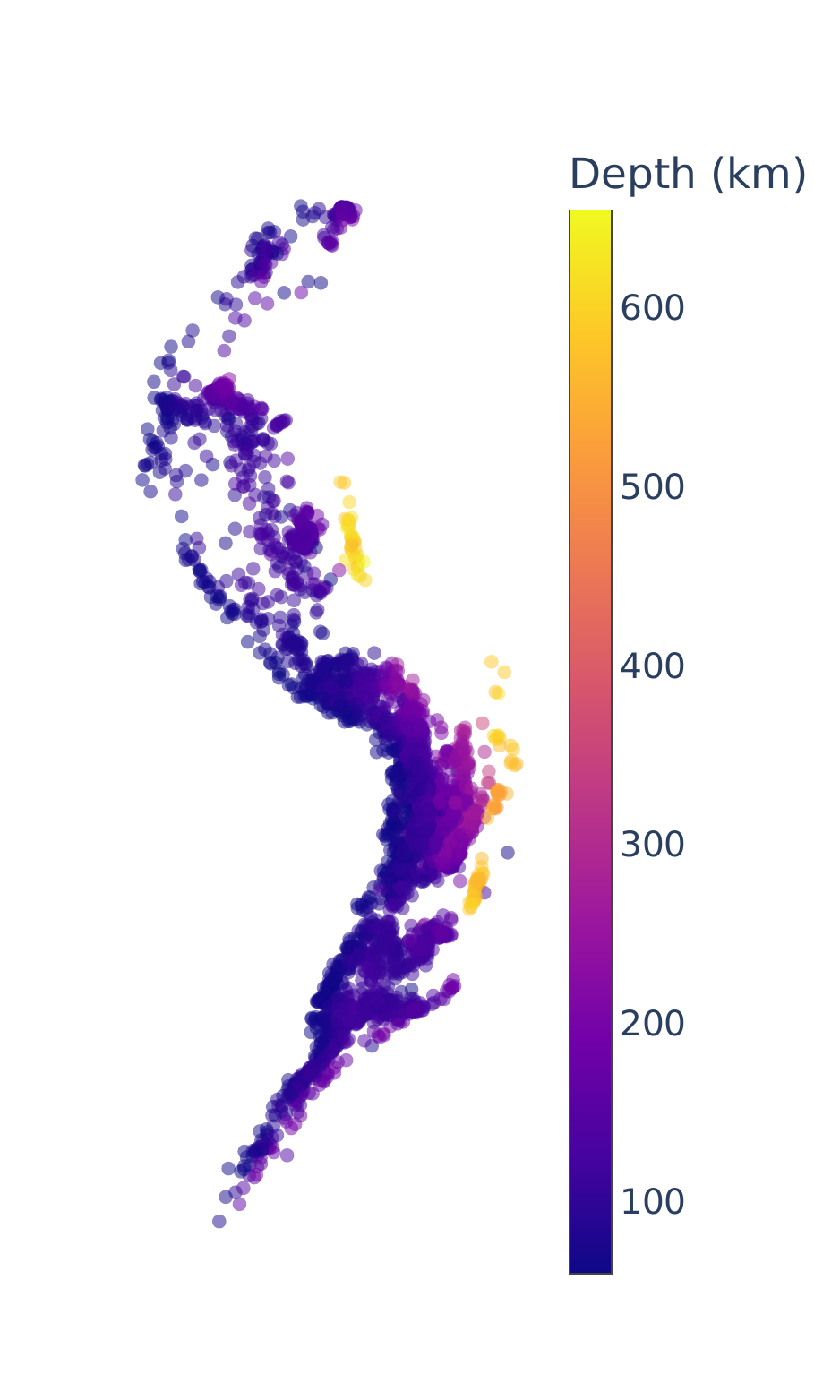}
    \includegraphics[width=0.3\linewidth]{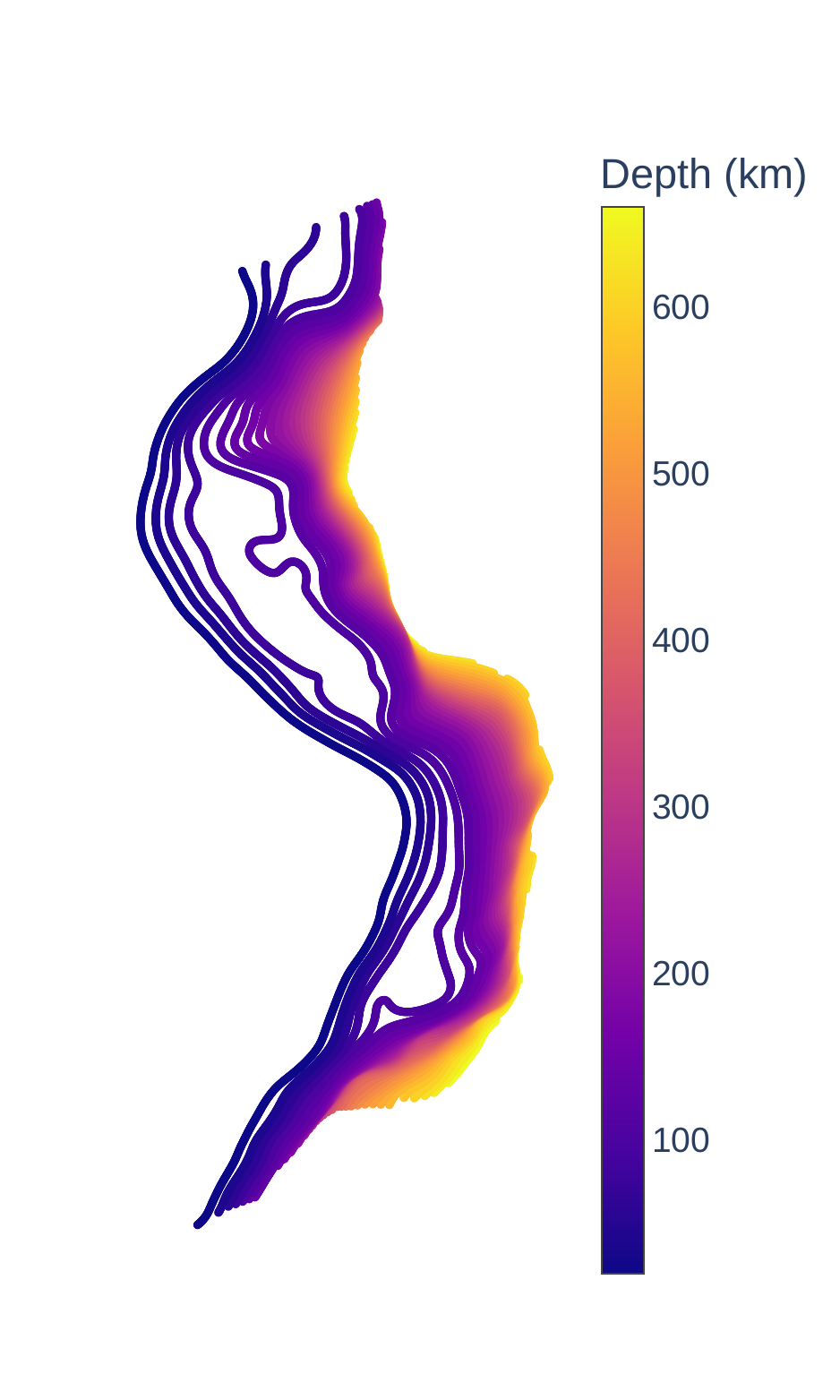}
    \caption{Left: Earthquake hypocenters in western South America with focal depths greater than or equal to 60 km and magnitudes larger than 4. 
    Right: Subduction slab contours from SLAB2.0 of the same region.}
    \label{fig:data-and-truth}
\end{figure}
 
\begin{figure}
    \centering
    \includegraphics[width=0.3\linewidth]{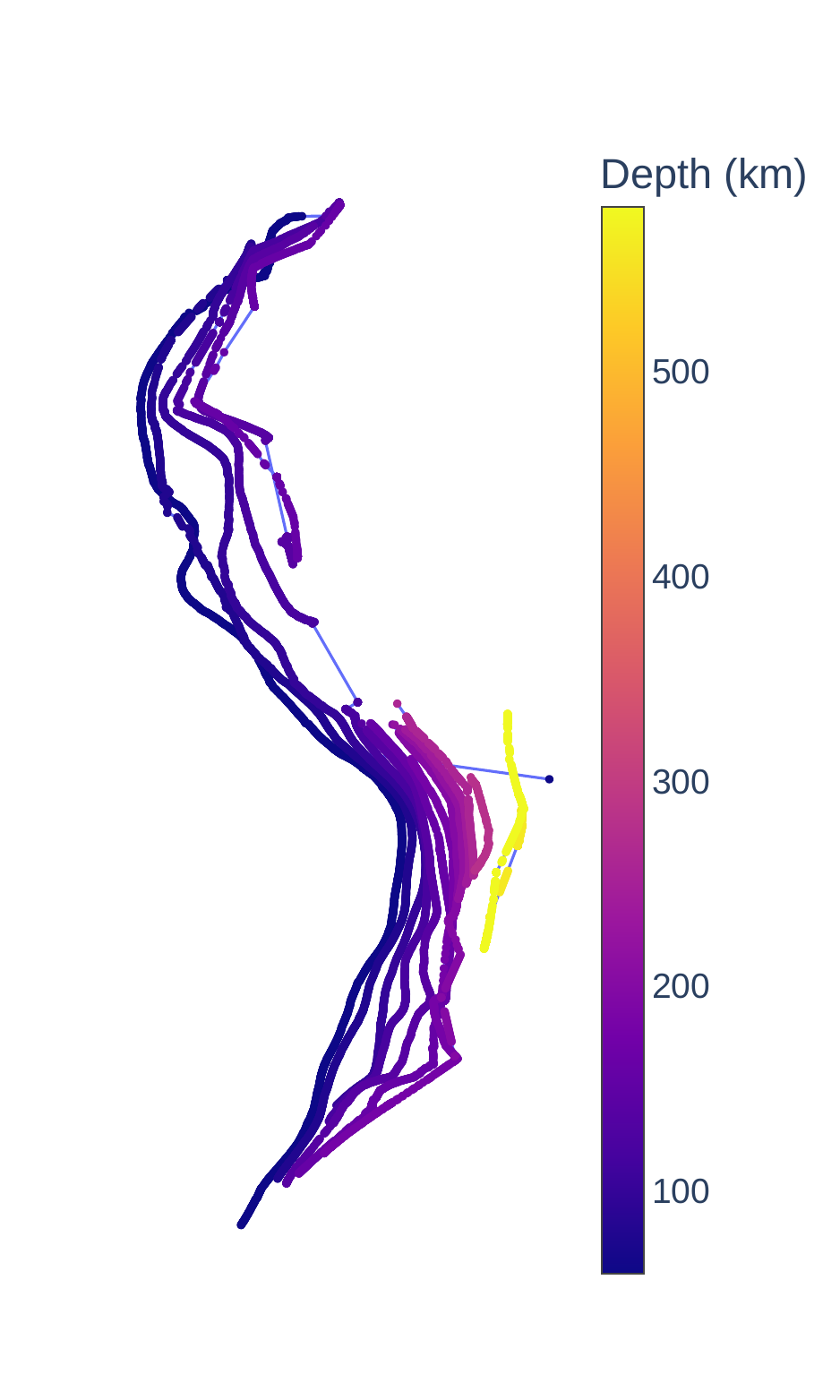}
    \includegraphics[width=0.3\linewidth]{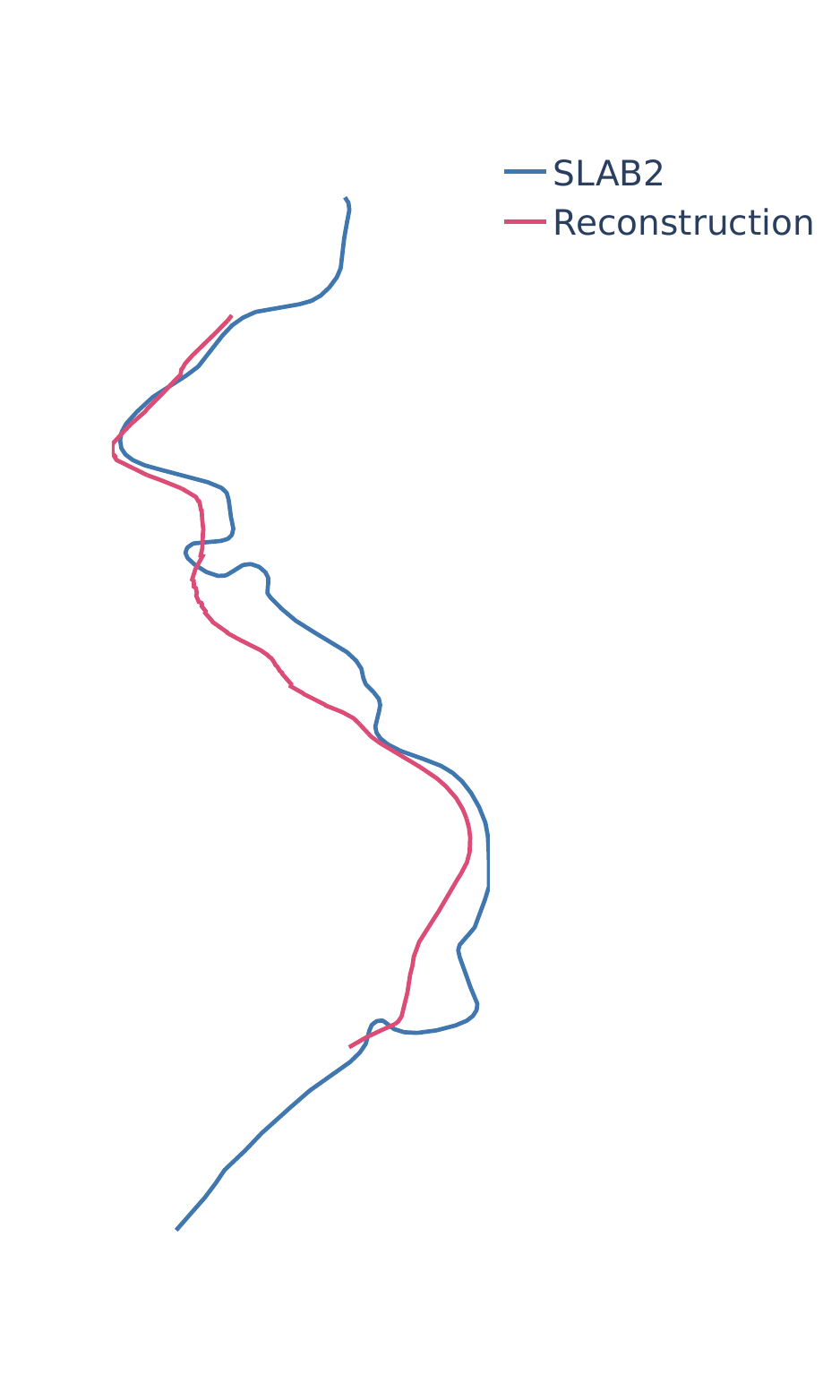}
    \caption{Left: Geometric graph reconstruction contour lines of the western South America tectonic slab using earthquake hypocenters in 20 km depth ranges beginning at 60 km.
    Right: Geometric graph reconstruction using earthquake hypocenters with a depth range of 90--110 km, compared to the 100 km depth contour of SLAB2.0.}
    \label{fig:results}
\end{figure}

\subsection{Geometric Smoothing (Post Processing)}\label{ssec:post-process}

 The $\alpha$-Reeb graph provides a topologically correct reconstruction of a given graph under the given assumptions (\thmref{main-thm}). To geometrically embed the nodes and edges of this graph in a way that respects  the underlying geometry of the data, for each edge in the graph, we obtain the list of data points from the underlying data that are contained in the corresponding vertices of the $\alpha$-Reeb graph, 
 $\{p_i\}_{i \in v_1 \cup v_2}$ as well as their corresponding filter value $f(p_i)$. Each vertex is itself assigned a filter value, denoted $f(v)$, given by the median of the filter values in that vertex. Then, parameterizing the edge on the interval $[f(v_1), f(v_2)]$, for each $t$ in this interval we produce the embedding via a weighted average of locations, $E(t)=\sum p_i\cdot  w_i(t)$, where $w_i(t)$ are the normalized weights over the set of data contained in $v_1 \cup v_2$ given by a Gaussian kernel centered at $t$, obtained via $$w_i(t) \propto \exp\left(\frac{-(f(p_i)-t)^2}{2\sigma^2}\right).$$ 

 Selecting $\sigma < \frac{\alpha}{4}$ ensures approximate smoothness of the embedding between adjacent edges. 

\end{document}